\documentclass[sigconf]{acmart}

\usepackage{booktabs} 
\usepackage{multirow}
\usepackage{graphics}
\usepackage{graphicx}
\usepackage{algorithm}
\usepackage{algorithmic}
\usepackage{amsfonts}
\usepackage{amsmath}
\usepackage{color, url}
\usepackage[skip=0pt]{caption}
\usepackage[skip=0pt]{subcaption}
\usepackage{url}
\usepackage{xcolor, colortbl}
\usepackage{booktabs}
\usepackage{enumitem}
\usepackage{bbm}
\usepackage{dsfont}
\usepackage{amssymb}

\usepackage{bm}

\usepackage{url}

\usepackage{subcaption}

\usepackage[mathscr]{eucal}

\usepackage{bigstrut,multirow,rotating}
\usepackage{placeins}

\graphicspath{ {figs/} }
\usepackage{algorithm,algorithmic}

\usepackage{balance}

\newtheorem{thm}{Theorem}[]

\setlength{\textfloatsep}{0.15cm}
\setlength{\floatsep}{0.15cm}

\newcommand{\w}{\bm{w}}
\newcommand{\x}{\bm{x}}
\newcommand{\y}{\bm{y}}
\newcommand{\X}{\bm{X}}
\newcommand{\Y}{\bm{Y}}

\newcommand{\myparatight}[1]{\smallskip\noindent{\bf {#1}:}~}


\copyrightyear{2020}
\acmYear{2020}
\setcopyright{iw3c2w3}
\acmConference[WWW '20]{Proceedings of The Web Conference 2020}{April 20--24, 2020}{Taipei, Taiwan}
\acmBooktitle{Proceedings of The Web Conference 2020 (WWW '20), April 20--24, 2020, Taipei, Taiwan}
\acmPrice{}
\acmDOI{10.1145/3366423.3380072}
\acmISBN{978-1-4503-7023-3/20/04}

\settopmatter{printacmref=true}


\begin{document}

\setlength{\abovedisplayskip}{1.5mm}
\setlength{\belowdisplayskip}{1.5mm}

\title{Influence Function based Data Poisoning Attacks to Top-$N$ Recommender Systems}

\author{Minghong Fang}
\affiliation{%
  \institution{Iowa State University}
}
\email{myfang@iastate.edu}

\author{Neil Zhenqiang Gong}
\affiliation{%
	\institution{Duke University}
}
\email{neil.gong@duke.edu}

\author{Jia Liu}
\affiliation{%
	\institution{Iowa State University}
}
\email{jialiu@iastate.edu}

\begin{abstract}
Recommender system is an essential component of web services to engage users. Popular recommender systems model user preferences and item properties using a large amount of crowdsourced user-item interaction data, e.g., rating scores; then top-$N$ items that match the best with a user's preference are recommended to the user.
In this work, we show that an attacker can launch a \emph{data poisoning attack} to a recommender system to make recommendations as the attacker desires via injecting fake users with carefully crafted user-item interaction data. 
Specifically, an attacker can trick a recommender system to recommend a target item to as many normal users as possible. 
We focus on matrix factorization based recommender systems because they have been widely deployed in industry. 
Given the number of fake users the attacker can inject, we formulate the crafting of rating scores for the fake users as an optimization problem. 
However, this optimization problem is challenging to solve as it is a non-convex integer programming problem. 
To address the challenge, we develop several techniques to approximately solve the optimization problem.  
For instance, we leverage \emph{influence function} to select a subset of normal users who are influential to the recommendations and solve our formulated optimization problem based on these influential users. 
Our results show that our attacks are effective and outperform existing methods.

\end{abstract}

\begin{CCSXML}
<ccs2012>
<concept>
<concept_id>10002978.10003022.10003026</concept_id>
<concept_desc>Security and privacy~Web application security</concept_desc>
<concept_significance>500</concept_significance>
</concept>
</ccs2012>
\end{CCSXML}

\ccsdesc[500]{Security and privacy~Web application security}

\keywords{Adversarial recommender systems, data poisoning attacks, adversarial machine learning.}

\maketitle


\section{Introduction} \label{sec:intro}

Recommender system is a key component of many web services to help users locate items they are interested in. Many recommender systems are based on collaborative filtering. For instance, given a large amount of user-item interaction data (we consider rating scores in this work) provided by users, a recommender system learns to model latent users' preferences and items'  features, and then the system recommends top-$N$ items to each user, where the features of the top-$N$ items best match with the user's preference. 

As a recommender system is driven by user-item interaction data, an attacker can manipulate a recommender system via injecting fake users with fake user-item interaction data to the system. Such attacks are known as \emph{data poisoning attacks}~\cite{lam2004shilling,mobasher2007toward,wilson2013power,li2016data,yang2017fake,fang2018poisoning,fang2019local}. 
Several recent studies designed recommender-system-specific data poisoning attacks to association-rule-based~\cite{yang2017fake}, graph-based ~\cite{fang2018poisoning} and matrix-factorization-based recommender systems~\cite{li2016data}. 
However, how to design customized attacks to matrix-factorization-based top-$N$ recommender systems remains an open question even though such recommender systems have been widely deployed in the industry. 
In this work, we aim to bridge the gap. In particular, we aim to design an optimized data poisoning attack to {\em matrix-factorization-based top-$N$ recommender systems}. 
Suppose that an attacker can inject $m$ fake users into the recommender system and each fake user can rate at most $n$ items, which we call \emph{filler items}. Then, the key question is: {\em how to select the filler items and assign  rating scores to them such that an attacker-chosen target item is recommended to as many normal users as possible?} 
To answer this question, we formulate an optimization problem for selecting filler items and assigning rating scores for the fake users, with an objective to maximize the number of normal users to whom the target item is recommended.

However, it is challenging to solve this optimization problem because it is a non-convex integer programming problem. 
To address the challenge, we propose a series of techniques to approximately solve the optimization problem. 
First, we propose to use a loss function to approximate the number of normal users to whom the target item is recommended. 
We relax the integer rating scores to continuous variables and convert them back to integer rating scores after solving the reformulated optimization problem.  
Second, to enhance the effectiveness of our attack, we leverage the {\em influence function approach} inspired by the interpretable machine learning literature \cite{koh2017understanding,wang2018data,koh2019accuracy} to account for the reality that the top-$N$ recommendations may be only affected by a subset $\mathcal{S}$ of influential users.
For convenience, throughout the rest of this paper, we refer to our attack as $\mathcal{S}$-attack.
We show that the influential user selection subproblem enjoys the {\em submodular} property, which guarantees a $(1-1/e)$ approximation ratio with a simple greedy selection algorithm.
Lastly, given $\mathcal{S}$, we develop a gradient-based optimization algorithm to determine rating scores for the fake users.

We evaluate our $\mathcal{S}$-attack and compare it with multiple baseline attacks on two benchmark datasets, including Yelp and Amazon Digital Music (Music). 
Our results show that our attacks can effectively promote a target item. For instance, on the Yelp dataset, when injecting only 0.5\% of fake users, our attack can make a randomly selected target item appear in the top-$N$ recommendation lists of 150 times more normal users. 
Our $\mathcal{S}$-attack outperforms the baseline attacks and continues to be effective even if the attacker does not know the parameters of the target recommender system.
We also investigate the effects of our attacks on recommender systems that are equipped with fake users detection capabilities.
For this purpose, we train a binary classifier to distinguish between fake users and normal ones. 
Our results show that this classifier is effective against traditional attack schemes, e.g., PGA attack \cite{li2016data}, etc.
Remarkably, we find that our influence-function-based attack continues to be effective.
The reason is that our proposed attack is designed with stealth in mind, and
the detection method can detect some fake users but miss a large fraction of them. 

Finally, we show that our influence function based approach can also be used to enhance data poisoning attacks to graph-based top-$N$ recommender systems. Moreover, we show that instead of using influence function to select a subset of influential users, using influence function to weight each normal user can further improve the effectiveness of data poisoning attacks, though such approach sacrifices computational efficiency.

In summary, our contributions are as follows:
\begin{list}{\labelitemi}{\leftmargin=1em}
\item We propose the first data poisoning attack to matrix-factorization-based Top-$N$ recommender systems, which we formulate as a non-convex integer optimization problem.   

\item We propose a series of techniques to approximately solve the optimization problem with provable performance guarantee. 

\item We evaluate our $\mathcal{S}$-attack and compare it with state-of-the-art using two benchmark datasets. Our results show that our attack is effective and outperforms existing ones.

\end{list}


\vspace{-.1in}
\section{Related Work} \label{sec:related}

\myparatight{Data poisoning attacks to recommender systems} The security and privacy issues in machine learning models have been studied in many scenarios ~\cite{yin2018byzantine,shafahi2018poison,shokri2017membership,nasr2018comprehensive,wang2018stealing,yang2019byzantine,zhang2020private}.
The importance of data poisoning attacks has also been recognized in recommender systems \cite{christakopoulou2019adversarial,xing2013take,mobasher2005effective,seminario2014attacking,mehta2008attack,mobasher2007toward}.
Earlier work on poisoning attacks against recommender systems are mostly agnostic to recommender systems and do not achieve satisfactory attack performance, e.g., random attack \cite{lam2004shilling} and average attack \cite{lam2004shilling}. 
Recently, there is a line of work focusing on attacking specific types of recommender systems \cite{yang2017fake,fang2018poisoning, li2016data}. 
For example, 
Fang {\em et al.}~\cite{fang2018poisoning} proposed efficient poisoning attacks to graph-based recommender systems. They injected fake users with carefully crafted rating scores to the recommender systems in order to promote a target item. They modeled the attack as an optimization problem to decide the rating scores for the fake users. 
Li {\em et al.}~\cite{li2016data} proposed poisoning attacks to matrix-factorization-based recommender systems. Instead of attacking the top-$N$ recommendation lists, their goal was to manipulate the predictions for all missing entries of the rating matrix.   
As a result, the effectiveness of their attacks is unsatisfactory in matrix-factorization-based top-$N$ recommender systems.

\myparatight{Data poisoning attacks to other systems} Data poisoning attacks generally refer to attacks that manipulate the training data of a machine learning or data mining system such that the learnt model makes predictions as an attacker desires. Other than recommender systems, data poisoning attacks were also studied for other systems. For instance, existing studies have demonstrated effective data poisoning attacks can be launched to anomaly detectors~\cite{rubinstein2009antidote}, spam filters~\cite{Nelson08poisoningattackSpamfilter}, SVMs~\cite{biggio2012poisoning,xiao2012adversarial},    regression methods~\cite{xiao2015feature,Jagielski18}, graph-based methods~\cite{zugner2018adversarial,Wang19}, neural networks~\cite{Gu17,Liu18,Chen17}, and federated learning~\cite{fang2019local}, which significantly affect their performance.


\section{Problem Formulation} \label{sec:problem}

\subsection{Matrix-Factorization-Based Recommender Systems: A Primer}
A matrix-factorization-based recommender system \cite{koren2009matrix} maps users and items into latent factor vectors. 
Let $\mathcal{U}$, $\mathcal{I}$ and $\mathcal{E}$ denote the user, item and rating sets, respectively. 
We also let $\vert \mathcal{U} \vert$, $\vert \mathcal{I} \vert$ and $\vert \mathcal{E} \vert$ denote the numbers of users, items and ratings, respectively. 
Let $ \bm{R} \in {{\mathbb{R}}^{\vert \mathcal{U} \vert \times \vert \mathcal{I} \vert}}$ represent the user-item rating matrix, where each entry ${r_{ui}}$ denotes the score that user $u$ rates the item $i$. 
Let $\bm{x}_u \in {\mathbb{R}^d}$ and $\bm{y}_i \in {\mathbb{R}^d}$ denote the latent factor vector for user $u$ and   item $i$, respectively, where $d$ is the dimension of latent factor vector.
For convenience, we use matrices $\X = [\x_{1},\ldots,\x_{|\mathcal{U}|}]$ and $\Y = [\y_{1},\ldots,\y_{|\mathcal{I}|}]$ to group all $\x$- and $\y$-vectors.
In matrix-factorization-based recommender systems, we aim to learn $\X$ and $\Y$ via solving the following optimization problem:
\begin{align}
\label{orig_matrix_opti_problem}
\mathop {\arg \min }\limits_{{\bm{X}},{\bm{Y}}} \!\!\! {\sum\limits_{(u,i) \in \mathcal{E}} {\left( {{r_{ui}} - \boldsymbol{x}_u^\top{\bm{y}_i}} \right)} ^2} + \lambda \left( {\sum\limits_u {{\lVert \bm{x}_u \rVert}_2^2 } } +  {\sum\limits_i {{\lVert \bm{y}_i \rVert}_2^2 } } \right),
\end{align}
where ${\lVert \cdot \rVert}_2$ is the $\ell_2$ norm and $\lambda$ is the regularization parameter. 
Then,  the rating score that a user $u$ gives to an unseen item $i$ is predicted  as $\hat{r}_{ui} = \bm{x}_u^\top{\bm{y}_i}$, where $\bm{x}_u^\top$ denotes the transpose of vector $\bm{x}_u$.
Lastly, the $N$ unseen items with the highest predicted rating scores are recommended to each user.

\subsection{Threat Model}
Given a target item $t$, the goal of the attacker is to promote item $t$ to as many normal users as possible and maximize the hit ratio $h(t)$, which is defined as the fraction of normal users whose top-$N$ recommendation lists include the target item $t$. 
We assume that the attacker is able to inject some fake users into the recommender system, each fake user will rate the target item $t$ with high rating score and give carefully crafted rating scores to other well-selected items.
The attacker may have full knowledge of the target recommender system (e.g., all the rating data, the recommendation algorithm). The attacker may also only have partial knowledge of  the target recommender system, e.g., the attacker only has access to some ratings. We will  show that our attacks are still effective when the attacker has partial knowledge of the target recommender system.

\subsection{Attack Strategy}

We assume that the rating scores of the target recommender system are integer-valued and can only be selected from the set $\{0, 1, \cdots,$  $r_{max}\}$, where $r_{max}$ is the maximum rating score. 
We assume that the attacker can inject $m$ fake users into the recommender system. We denote by $\mathcal{M}$ the set of $m$ fake users. Each fake user will rate the target item $t$ and at most $n$ other carefully selected items (called \emph{filler items}). We consider each fake user rates at most $n$ filler items to avoid being easily detected. 
We let $\bm{r}_{v}$ and $\Omega _v$ denote the rating score vector of fake user $v$ and the set of items rated by $v$, respectively, where $v \in \mathcal{M}$ and $\vert {\Omega _v} \vert \le n+1$.  
Then, $r_{vi}$ is the score that user $v$ rates the item $i$, $i \in \Omega _v$. 
Clearly, ${\Omega _v}$ satisfies $\vert {\Omega _v} \vert = \lVert {\bm{r}_v} \rVert _0$, where ${\lVert \cdot \rVert}_0$ is the $\ell_0$ norm (i.e., the number of non-zero entries in a vector).
The attacker's goal is to find an optimal rating score vector for each fake user $v$ to maximize the hit ratio $h(t)$.
We formulate this hit ratio maximization problem (HRM) as follows:
\begin{align}
\label{objectiveFrame}
\text{HRM:  max}  & \medspace\medspace \textstyle h(t)  &  \\
\label{budgetconstraint}
\text{s.t. } \hspace{-0.25em} & \medspace\medspace \vert {\Omega _v} \vert  \le n+1,    && \forall v \in \mathcal{M}, \hspace{0.25em} &  \\
\label{itemSelected}
& \medspace \medspace r_{vi} \in \{0,1,\cdots,r_{max}\},   && \forall v \in \mathcal{M}, \forall i \in \Omega _v.
\end{align}
Problem HRM is an integer programming problem and is NP-hard in general.
Thus, finding an optimal solution is challenging.
In the next section, we will propose techniques to approximately solve the problem.


\section{Our Solution} \label{AttackModel}

We optimize the rating scores for fake users one by one instead of optimizing for all the $m$ fake users simultaneously. In particular, we repeatedly optimize the rating scores of one fake user and add the fake user to the recommender system until we have $m$ fake users. However, it is still challenging to solve the HRM problem even if we consider only one fake user. To address the challenge, we design several techniques to approximately solve the HRM problem for one fake user. First, we relax the discrete ratings to continuous data and convert them back to discrete ratings after solving the problem.  Second, we use a differentiable loss function to approximate the hit ratio. Third, instead of using all normal users, we use a selected subset of influential users to solve the HRM problem, which makes our attack more effective. Fourth, we develop a gradient-based method to solve the HRM problem to determine the rating scores for the fake user.  

\subsection{Relaxing Rating Scores} 
We let vector $\bm{w}_{v} = [w_{vi}, i \in \Omega_v]^{\top}$ be the relaxed continuous rating score vector of fake user $v$, where $w_{vi}$ is the rating score that user $v$ gives to the item $i$. 
Since $r_{vi} \in \{0,1,\cdots,r_{max}\}$ is discrete, which makes it difficult to solve the optimization problem defined in (\ref{objectiveFrame}), we relax the discrete rating score $r_{vi}$ to continuous variables $w_{vi}$ that satisfy $w_{vi} \in [0,r_{max}]$.
Then, we can use  gradient-based methods to compute $\bm{w}_{v}$. 
After we solve the optimization problem, we convert each $w_{vi}$ back to a discrete integer value in the set $ \{0,1,\cdots,r_{max}\}$.

\subsection{Approximating the Hit Ratio}

We let $\Gamma_u$ be the set of top-$N$ recommended items for a user $u$, i.e., $\Gamma_u$ consists of the $N$ items that $u$ has not rated before and have the largest predicted rating scores. 
To approximate the optimization problem defined in (\ref{objectiveFrame}),
 we define a  loss function that is subject to the following rules: 1) for each item $i \in \Gamma_u$, if $\hat{r}_{ui} < \hat{r}_{ut}$, then the loss is small, where $\hat{r}_{ui}$ and $\hat{r}_{ut}$ are the predicted rating scores that user $u$ gives to item $i$ and target item $t$, respectively; 
2) the higher target item $t$ ranks in $\Gamma_u$, the smaller the loss. 
Based on these rules, we reformulate the HRM problem as the following problem:
\begin{align}
\label{appproblem_orginal}
\begin{split}
& \min_{\bm{w}_v} \mathcal{L}_{\mathcal{U}}(\bm{w}_v) = \sum_{u \in \mathcal{U}} {\sum_{i \in \Gamma_u}  g(\hat{r}_{ui} - \hat{r}_{ut})} +  \eta {\left\| \bm{w}_v \right\|_1} \\
& \text{   s.t. } w_{vi} \in [0, r_{max}],
\end{split}
\end{align}
where $g(x) = \frac{1}{{1 + \exp (-{x/b})}}$ is the Wilcoxon-Mann-Whitney loss function~\cite{backstrom2011supervised}, $b$ is the \emph{width} parameter, $\eta$ is the regularization parameter, and ${\lVert \cdot \rVert}_1$ is the $\ell_1$ norm. 
Note that $g(\cdot)$ guarantees that $\mathcal{L}_{\mathcal{U}}(\bm{w}_v) \geq 0$ and is differentiable. 
The $\ell_1$ regularizer $\left\| \bm{w}_v \right\|_1$ aims to model the constraint that each fake user rates at most $n$ filler items. In particular, 
the $\ell_1$ regularizer makes a fake user's ratings small to many items and we can select the $n$ items with the largest ratings as the filler items.

\subsection{Determining the Set of Influential Users}

It has been observed in \cite{lapedriza2013all, wang2018data} that different training samples have different  contributions to the solution quality of an optimization problem, and the performance of the model training could be improved if we drop some training samples with low contributions. 
Motivated by this observation,  instead of optimizing the ratings of a fake user over all normal users, we solve the problem in (\ref{appproblem_orginal}) using a subset of \emph{influential users}, who are the most responsible for the prediction of  the target item before attack. 
We let $\mathcal{S} \in \mathcal{U}$ represent the set of influential users for the target item $t$.
For convenience, in what follows, we refer to our attack as $\mathcal{S}$-attack.
Under the $\mathcal{S}$-attack, we further reformulate (\ref{appproblem_orginal}) as the following problem:
\begin{align}
\label{appproblem}
\begin{split}
& \min_{\bm{w}_v} \mathcal{L}_{\mathcal{S}}(\bm{w}_v) = \sum_{u \in \mathcal{S}} {\sum_{i \in \Gamma_u}  g(\hat{r}_{ui} - \hat{r}_{ut})} +  \eta {\left\| \bm{w}_v \right\|_1} \\
& \text{   s.t. } w_{vi} \in [0, r_{max}].
\end{split}
\end{align}

Next, we propose an influence function approach to determine $\mathcal{S}$
and then solve the optimization problem defined in (\ref{appproblem}). 
We let $\digamma(\mathcal{S},t)$ denote the influence of removing all users in the set $\mathcal{S}$ on the prediction at the target item $t$, where influence here is defined as the change of the predicted rating score. We want to find a set of influential users that have the largest influence on the target item $t$. Formally, the influence maximization problem can be defined as:
\begin{align}
\label{influ_opt}
\text{max}   \medspace\medspace \textstyle \digamma(\mathcal{S},t), \quad 
\text{s.t. } \hspace{-0.5em}  \medspace\medspace \vert {\mathcal{S}} \vert  = \Delta,
\end{align}
where $\Delta$ is the desired set size (i.e., the number of users in set $\mathcal{S}$). 
However, it can be shown that the problem is NP-hard \cite{kempe2003maximizing}.
In order to solve the above influence maximization problem of (\ref{influ_opt}), we first show how to measure the influence of one user, then we show how to approximately find a set of $\Delta$ users with the maximum influence.

We define $\pi(k,t)$ as the influence of removing user $k$ on the prediction at the target item $t$:
\begin{align}
\label{influence_user_target}
\pi(k,t) & \stackrel{\text{def}} = \sum\nolimits_{j \in {\Omega _k}} {\varphi((k,j),t)},
\end{align}
where $\varphi((k,j),t)$ is the influence of removing edge $(k,j)$ in the user-item bipartite on the prediction at the target item $t$, ${\Omega _k}$ is the set of items rated by user $k$.
Then, the influence of removing user set $\mathcal{S}$ on the prediction at the target item $t$ can be defined as:
\begin{align}
\label{influ_set_S}
\digamma(\mathcal{S},t) & \stackrel{\text{def}} = \sum\nolimits_{k \in \mathcal{S}} {\pi(k,t)}.
\end{align}

Since the influence of user and user set can be computed based on the edge influence $\varphi((k,j),t)$, the key challenge boils down to how to evaluate $\varphi((k,j),t)$ efficiently. 
Next, we will propose an appropriate influence function to efficiently compute $\varphi((k,j),t)$.

\subsubsection{Influence Function for Matrix-factorization-based Recommender Systems}
For a given matrix-factorization-based recommender system, we can rewrite (\ref{orig_matrix_opti_problem}) as follows:
\begin{align}
\label{matrix_square_loss_influ}
{\bm{\theta} ^*} = \mathop {\arg \min }\limits_{\bm{\theta} } \frac{1}{{|\mathcal{E}|}} \sum\limits_{(u,i) \in \mathcal{E}} {\ell((u,i),\bm{\theta} )},
\end{align}
where $\bm{\theta} \!\triangleq\! (\bm{X},\bm{Y})$.
We let ${\hat{r}_{ui}}(\bm{\theta} )$ denote the predicted rating score  user $u$ gives to item $i$ under parameter $\bm{\theta}$, and ${\hat{r}_{ui}}(\bm{\theta} ) \triangleq  {\bm{x}_u^\top}(\bm{\theta} ){\bm{y}_i}(\bm{\theta} )$.

If we increase the weight of the edge $(k,j) \in \mathcal{E} $ by some $\zeta$, then the perturbed optimal parameter $\bm{\theta} _{\zeta ,(k,j)}^*$ can be written as:
\begin{align}
\bm{\theta} _{\zeta ,(k,j)}^* = \mathop {\arg \min }\limits_{\bm{\theta}}  \frac{1}{{|\mathcal{E}|}} \sum\limits_{(u,i) \in \mathcal{E}} {\ell((u,i),\bm{\theta} )} + \zeta \ell((k,j),\bm{\theta} ).
\end{align}

Since removing the edge $(k,j)$ is equivalent to increasing its weight by $\zeta =-\frac{1}{{|\mathcal{E}|}}$, the influence of removing edge $(k,j)$ on the prediction at edge $(o,t)$ can be approximated as follows \cite{cook1980characterizations,koh2017understanding}:
\begin{align}
\Phi((k,j),(o,t)) 
& \!\stackrel{\text{def}}= \!{\hat{r}_{ot}} \! \left( {\bm{\theta} _{\varepsilon  \setminus (k,j)}^*} \right) - {\hat{r}_{ot}}({\bm{\theta} ^*}) 
\! \approx \! -\frac{1}{{|\mathcal{E}|}} \! \cdot \!\! {\left. {\frac{{\partial {\hat{r}_{ot}}\left(\bm{\theta} _{\zeta ,(k,j)}^*\right)}}{{\partial \zeta }}} \right|_{\zeta  = 0  }}  \nonumber  \\
&= \frac{1}{{|\mathcal{E}|}} {\nabla _{\bm{\theta}} }{\hat{r}_{ot}^\top}({\bm{\theta} ^*}) \bm{H}_{{\bm{\theta} ^*}}^{ - 1} {\nabla _{\bm{\theta}} }\ell((k,j),{\bm{\theta} ^*}), 
\end{align}
where ${\bm{\theta} _{\varepsilon  \setminus (k,j)}^*}$ is the optimal model parameter after removing edge $(k,j)$ and ${\bm{H}_{{\bm{\theta} ^*}}}$ represents the Hessian matrix of the objective function defined in (\ref{matrix_square_loss_influ}). 
Therefore, the influence of removing edge $(k,j)$ on the prediction at the target item $t$ can be computed as:
\begin{align}
\varphi((k,j),t)  & \stackrel{\text{def}}= \sum\nolimits_{o \in \mathcal{U}} \left| {\Phi((k,j),(o,t))} \right|, 
\end{align}
where $\left| \cdot \right|$ is the absolute value.

\subsubsection{Approximation Algorithm for Determining $\mathcal{S}$} \label{sec:Approximating_influential_user_set}

Due to the combinatorial complexity, solving the optimization problem defined in (\ref{influ_opt}) remains an NP-hard problem. 
Fortunately, based on the observation that the influence of set $\mathcal{S}$ (e.g., $\digamma(\mathcal{S},t)$) exhibits a diminishing returns property, we propose a greedy selection algorithm to find a solution to (\ref{influ_opt}) with an approximation ratio guarantee. 
The approximation algorithm is a direct consequence of the following result, which says that the influence $\digamma(\mathcal{S},t)$ is monotone and submodular.
\begin{thm}[Submodularity] \label{monotonically_non_decreasing}
	The influence $\digamma(\mathcal{S},t)$ is normalized, monotonically non-decreasing and submodular.
\end{thm}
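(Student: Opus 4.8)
The plan is to observe that $\digamma(\cdot,t)$ is in fact a \emph{non-negative modular} set function, from which all three properties follow almost immediately; submodularity then reduces to the degenerate case in which the diminishing-returns inequality holds with equality.

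First I would record the signs of the building blocks. By definition $\varphi((k,j),t) = \sum_{o\in\mathcal{U}} \left| \Phi((k,j),(o,t)) \right| \ge 0$ is a sum of absolute values, hence $\pi(k,t) = \sum_{j\in\Omega_k}\varphi((k,j),t)\ge 0$ for every $k\in\mathcal{U}$. The key point is that each $\pi(k,t)$ depends only on the pre-attack optimal parameters $\bm{\theta}^*$ (through $\nabla_{\bm{\theta}}\hat{r}_{ot}(\bm{\theta}^*)$, $\bm{H}_{\bm{\theta}^*}^{-1}$, and the loss gradients on the edges incident to $k$) and not on which other users lie in $\mathcal{S}$; therefore the coefficients appearing in $\digamma(\mathcal{S},t)=\sum_{k\in\mathcal{S}}\pi(k,t)$ are fixed, and $\digamma(\cdot,t)$ is a weighted sum over the ground set $\mathcal{U}$, i.e.\ a modular function.

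Given this, I would check the three claims in turn. Normalized: $\digamma(\emptyset,t)=\sum_{k\in\emptyset}\pi(k,t)=0$. Monotone non-decreasing: for $\mathcal{A}\subseteq\mathcal{B}\subseteq\mathcal{U}$ we have $\digamma(\mathcal{B},t)-\digamma(\mathcal{A},t)=\sum_{k\in\mathcal{B}\setminus\mathcal{A}}\pi(k,t)\ge 0$ by the previous step. Submodular: for any $\mathcal{A}\subseteq\mathcal{B}\subseteq\mathcal{U}$ and any $k\notin\mathcal{B}$, the marginal gains satisfy $\digamma(\mathcal{A}\cup\{k\},t)-\digamma(\mathcal{A},t)=\pi(k,t)=\digamma(\mathcal{B}\cup\{k\},t)-\digamma(\mathcal{B},t)$, so in particular $\digamma(\mathcal{A}\cup\{k\},t)-\digamma(\mathcal{A},t)\ge\digamma(\mathcal{B}\cup\{k\},t)-\digamma(\mathcal{B},t)$, which is precisely the defining diminishing-returns inequality.

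There is no serious obstacle in this lemma — the substance of the paper lies elsewhere (in the quality of the first-order influence estimate $\Phi$, and in the gradient-based rating step), not here. The one thing worth being explicit about is the modeling assumption that makes the argument go through: $\varphi$, and hence $\pi$, is to be evaluated once at $\bm{\theta}^*$ rather than re-estimated after each greedy insertion into $\mathcal{S}$; if the influences were instead recomputed as $\mathcal{S}$ grows, one would need a genuine submodularity argument (e.g.\ a set-cover representation) in place of the trivial modular one, so I would flag this assumption at the start of the proof. As a side remark, because $\digamma(\cdot,t)$ is actually modular (not merely submodular) and monotone, the greedy procedure of Section~\ref{sec:Approximating_influential_user_set} is in fact exactly optimal here — it simply returns the $\Delta$ users with the largest values of $\pi(k,t)$ — though the weaker $(1-1/e)$ guarantee coming from submodularity is all that is invoked downstream.
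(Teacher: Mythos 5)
Your proof is correct and rests on exactly the same observation as the paper's: $\digamma(\cdot,t)=\sum_{k\in\mathcal{S}}\pi(k,t)$ is an additive (modular) set function with nonnegative weights $\pi(k,t)\ge 0$, from which normalization, monotonicity, and (degenerate, equality-case) submodularity all follow immediately --- the paper merely phrases the last step via the set identity $(\mathcal{B}\cup\mathcal{D})\setminus(\mathcal{A}\cup\mathcal{D})=\mathcal{C}\setminus(\mathcal{C}\cap\mathcal{D})$ rather than via equal marginal gains. Your side remarks are also apt: since the weights are evaluated once at $\bm{\theta}^*$ and never depend on $\mathcal{S}$, the function is genuinely modular, so the greedy procedure simply returns the $\Delta$ users with largest $\pi(k,t)$ and is exactly optimal, making the $(1-1/e)$ guarantee loose.
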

\begin{proof}
	Define three sets $\mathcal{A}$, $\mathcal{B}$ and $\mathcal{C}$, where $\mathcal{A} \subseteq \mathcal{B}$ and $\mathcal{C} = \mathcal{B} \setminus \mathcal{A}$. 
	To simplify the notation, we use $\digamma(\mathcal{A})$ to denote $\digamma(\mathcal{A},t)$. 
	It is clear that the influence function is normalized since $\digamma(\emptyset) = 0$.
	When there is no ambiguity, we let $\digamma(u)$ denote $\digamma(\{ u \})$ for $u \in \mathcal{U}$.
	Since
	$
	\digamma(\mathcal{B}) - \digamma(\mathcal{A}) 
	= \sum\limits_{u \in \mathcal{B}} {\digamma(u)} - \!\! \sum\limits_{u \in \mathcal{A}} {\digamma(u)} 
	= \!\!\!\! \sum\limits_{u \in \mathcal{B} \setminus \mathcal{A} } \!\! {\digamma(u)}
	= \digamma(\mathcal{C})
	\ge 0 \nonumber,
	$
	which implies that the influence $\digamma(\mathcal{S},t)$ is monotonically non-decreasing.
	To show the submodular property, we let $\overline{\mathcal{S}}$ denote the complement of a set $\mathcal{S}$.
	Now, consider an arbitrary set $\mathcal{D}$, for which we have:
	$
	\digamma(\mathcal{B} \cup \mathcal{D}) - \digamma(\mathcal{A} \cup \mathcal{D})
	= \digamma((\mathcal{B} \cup \mathcal{D}) \setminus (\mathcal{A} \cup \mathcal{D})) 
	\stackrel{(a)} = \digamma(\mathcal{C} \setminus (\mathcal{C} \cap \mathcal{D})) \le \digamma(\mathcal{C}) = \digamma(\mathcal{B}) - \digamma(\mathcal{A}), \nonumber
	$
	where $(a)$ follows from $(\mathcal{B} \cup \mathcal{D}) \setminus (\mathcal{A} \cup \mathcal{D}) = (\mathcal{B} \cup \mathcal{D}) \cap \overline{(\mathcal{A} \cup \mathcal{D})} = (\mathcal{B} \cup \mathcal{D}) \cap (\overline{\mathcal{A}} \cap \overline{\mathcal{D}}) = (\mathcal{B} \cap \overline{\mathcal{A}} \cap \overline{\mathcal{D}}) \cup (\mathcal{D} \cap \overline{\mathcal{A}} \cap \overline{\mathcal{D}}) = \mathcal{C} \cap \overline{\mathcal{D}} = \mathcal{C} \setminus (\mathcal{C} \cap \mathcal{D})$.
	Hence, the influence $\digamma(\mathcal{S},t)$ is submodular and the proof is completed.
\end{proof}	

Based on the submodular property of $\digamma(\mathcal{S},t)$, we propose Algorithm \ref{Find_Influential_user_Set}, a greedy-based selection method to select an influential user set $\mathcal{S}$ with $\Delta$ users.
More specifically, we first compute the influence of each user, and add the user with the largest influence to the candidate set $\mathcal{S}$ (breaking ties randomly). 
Then, we recompute the influence of the remaining users in the set $\mathcal{U} \setminus \mathcal{S}$, and find the user with the largest influence within the remaining users, so on and so forth.
We repeat this process until we find $\Delta$ users.
Clearly, the running time of Algorithm~\ref{Find_Influential_user_Set}  is linear.
The following result states that Algorithm~\ref{Find_Influential_user_Set} achieves a $(1-1/e)$ approximation ratio, and its proof follows immediately from standard results in submodular optimization \cite{nemhauser1978analysis}
 and is omitted here for brevity.
\begin{thm} \label{greedy_approximate}
	Let $\mathcal{S}$ be the influential user set returned by Algorithm~\ref{Find_Influential_user_Set} and let $\mathcal{S^*}$ be the optimal influential user set, respectively. It then holds that
	$\digamma(\mathcal{S},t) \ge \left( {1 - \frac{1}{e}} \right) \digamma(\mathcal{S^*},t) \nonumber$.
\end{thm}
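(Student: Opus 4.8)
The plan is to invoke the classical analysis of greedy maximization of a monotone submodular function under a cardinality constraint \cite{nemhauser1978analysis}, using Theorem~\ref{monotonically_non_decreasing} as the structural input. Write $\mathcal{S}_i$ for the set maintained by Algorithm~\ref{Find_Influential_user_Set} after $i$ greedy steps, so that $\mathcal{S}_0 = \emptyset$ and $\mathcal{S} = \mathcal{S}_\Delta$, and abbreviate $\digamma(\cdot) \triangleq \digamma(\cdot,t)$ as in the proof of Theorem~\ref{monotonically_non_decreasing}. Let $\mathcal{S}^* = \{ u_1^*, \ldots, u_\Delta^* \}$ be an optimal solution of (\ref{influ_opt}).

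First I would establish the standard per-step progress inequality: for every $i$,
\[
\digamma(\mathcal{S}_{i+1}) - \digamma(\mathcal{S}_i) \ \ge\ \frac{1}{\Delta}\bigl( \digamma(\mathcal{S}^*) - \digamma(\mathcal{S}_i) \bigr).
\]
To see this, monotonicity gives $\digamma(\mathcal{S}^*) \le \digamma(\mathcal{S}^* \cup \mathcal{S}_i)$; expanding $\digamma(\mathcal{S}^* \cup \mathcal{S}_i) - \digamma(\mathcal{S}_i)$ as a telescoping sum of marginal gains over the elements of $\mathcal{S}^* \setminus \mathcal{S}_i$ and bounding each such marginal gain, via submodularity, by the gain of adding that single element to $\mathcal{S}_i$ yields $\digamma(\mathcal{S}^*) - \digamma(\mathcal{S}_i) \le \sum_{j=1}^{\Delta} \bigl( \digamma(\mathcal{S}_i \cup \{ u_j^* \}) - \digamma(\mathcal{S}_i) \bigr) \le \Delta \cdot \max_{u \in \mathcal{U}} \bigl( \digamma(\mathcal{S}_i \cup \{ u \}) - \digamma(\mathcal{S}_i) \bigr)$; since the greedy rule in Algorithm~\ref{Find_Influential_user_Set} picks exactly the maximizer of that marginal gain, the last expression equals $\Delta \bigl( \digamma(\mathcal{S}_{i+1}) - \digamma(\mathcal{S}_i) \bigr)$.

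Next I would convert this into a recurrence on the residual gap $\delta_i \triangleq \digamma(\mathcal{S}^*) - \digamma(\mathcal{S}_i)$, namely $\delta_{i+1} \le (1 - 1/\Delta)\,\delta_i$, and unroll it using $\delta_0 = \digamma(\mathcal{S}^*)$ --- here the normalization $\digamma(\emptyset) = 0$ from Theorem~\ref{monotonically_non_decreasing} is used --- to obtain $\delta_\Delta \le (1 - 1/\Delta)^\Delta \digamma(\mathcal{S}^*) \le e^{-1} \digamma(\mathcal{S}^*)$. Rearranging gives $\digamma(\mathcal{S},t) = \digamma(\mathcal{S}_\Delta) \ge (1 - 1/e)\,\digamma(\mathcal{S}^*,t)$, which is the claim.

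I do not anticipate a real obstacle here; the only points to verify are that Algorithm~\ref{Find_Influential_user_Set} implements exact greedy marginal-gain selection and that $\Delta \le |\mathcal{U}|$ so that $\Delta$ steps are well defined. It is worth remarking that the proof of Theorem~\ref{monotonically_non_decreasing} in fact shows $\digamma(\mathcal{S}) = \sum_{k \in \mathcal{S}} \pi(k,t)$ is \emph{modular}, not merely submodular, so greedy here simply returns the $\Delta$ users of largest individual influence and is exactly optimal; the $(1 - 1/e)$ factor is thus a loose worst-case guarantee that holds a fortiori, and no argument beyond the submodular one above is needed.
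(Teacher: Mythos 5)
Your argument is correct and is essentially the same as the paper's (omitted) proof: both invoke the standard Nemhauser--Wolsey--Fisher greedy analysis, deriving the per-step inequality $\digamma(\mathcal{S}^*) - \digamma(\mathcal{S}_{i+1}) \le \left(1 - \frac{1}{\Delta}\right)\left(\digamma(\mathcal{S}^*) - \digamma(\mathcal{S}_{i})\right)$ from monotonicity, telescoping, and submodularity, and unrolling it with $\left(1-\frac{1}{\Delta}\right)^{\Delta} \le \frac{1}{e}$. Your closing observation that $\digamma(\mathcal{S},t)=\sum_{k\in\mathcal{S}}\pi(k,t)$ is in fact modular, so that greedy is exactly optimal and the $\left(1-\frac{1}{e}\right)$ guarantee is loose, is accurate and worth noting, but it does not change the validity of the argument.
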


\begin{algorithm}[!t]
	\caption{\textit{Greedy Influential User Selection.}}\label{Find_Influential_user_Set}
	\begin{algorithmic}[1]
		\renewcommand{\algorithmicrequire}{\textbf{Input:}}
		\renewcommand{\algorithmicensure}{\textbf{Output:}}
		\REQUIRE  Rating matrix $\bm{R}$, budget $\Delta$.
		\ENSURE  Influential user set $\mathcal{S}$.
		\STATE Initialize $\mathcal{S} = \emptyset $.
		\WHILE {$\mathcal{|S|} < \Delta$}
		\STATE Select $u = \arg {\max _{k \in \mathcal{U} \setminus \mathcal{S}}} \pi(k,t)$.
		\STATE $\mathcal{S} \leftarrow \mathcal{S} \cup \{u\}$.
		\ENDWHILE
		\RETURN $\mathcal{S}$. 
	\end{algorithmic} 
\end{algorithm}

\vspace{-.1in}
\subsection{Solving Rating Scores for a Fake User}

Given $\mathcal{S}$, 
we design a gradient-based method to solve the problem in (\ref{appproblem}).
Recall that we let $\bm{w}_{v} = [w_{vi}, i \in \Omega_v]^{\top}$ be the rating vector for the current injected fake user $v$.
We first determine his/her latent factors by solving Eq.~(\ref{orig_matrix_opti_problem}), which can be restated as:
\begin{align}
\label{mali_matrix_fact}
\mathop {\arg \min }\limits_{{\bm{X}},{\bm{Y}},\bm{z}} {\sum\limits_{(u,i) \in \mathcal{E ^ \prime }} {\left( {{r_{ui}} - \bm{x}_u^\top{\bm{y}_i}} \right)} ^2} 
+ {\sum\limits_{i \in \mathcal{I}} {\left( {{w_{vi}} - {\bm{z}^\top}{\bm{y}_i}} \right)} ^2} \notag \\  
+ \lambda \left( {\sum\nolimits_u {{\lVert \bm{x}_u \rVert}_2^2 } } +  {\sum\nolimits_i {{\lVert \bm{y}_i \rVert}_2^2 } } +  {{\lVert \bm{z} \rVert}_2^2 } \right),
\end{align}
where $\bm{z} \in {\mathbb{R}^d}$ is the latent factor vector for fake user $v$, and $\mathcal{E ^ \prime }$ is the current rating set (rating set $\mathcal{E}$ without attack plus injected ratings of fake users added before user $v$).

Toward this end, note that a subgradient of loss $\mathcal{L}_{\mathcal{S}}(\bm{w}_v)$ in (\ref{appproblem}) can be computed as: 
\begin{align}
\label{rwrOPtMIN}
& G({\bm{w}_v}) = \sum_{u \in \mathcal{S}} {\sum_{i \in \Gamma_u} \nabla_{\bm{w}_v} g(\hat{r}_{ui} - \hat{r}_{ut}) } + \eta {\partial \left\| {{\bm{w}_v}} \right\|_1} \notag \\
&= \sum_{u \in \mathcal{S}} {\sum_{i \in \Gamma_u} {\frac{{\partial g\left({\delta _{u,it}}\right)}}{{\partial {\delta _{u,it}}}}} }\left(\nabla_{\bm{w}_v} \hat{r}_{ui} - \nabla_{\w_v} \hat{r}_{ut} \right) + \eta {\partial \left\| {{\bm{w}_v}} \right\|_1},
\end{align}
where $\delta _{u,it} = \hat{r}_{ui} - \hat{r}_{ut}$ and $\frac{{\partial g\left( \delta _{u,it}\right)}}{{\partial \delta _{u,it}}} = \frac{{g\left( {\delta _{u,it}} \right)\left( {1 - g\left( {\delta _{u,it}} \right)} \right)}}{b}$. 
The subgradient $\partial \left\| {{\bm{w}_v}} \right\|_1$ can be computed as $\frac{\partial }{{\partial {w_{vi}}}}{\left\| {{\bm{w}_v}} \right\|_1} = \frac{{{w_{vi}}}}{{\left| {{w_{vi}}} \right|}}$.
To compute $\nabla_{\w_v} \hat{r}_{ui}$, noting that $\hat{r}_{ui} = \bm{x}_u^\top{\bm{y}_i}$, then the gradient $\frac{{\partial {\hat{r}_{ui}}}}{{\partial {\bm{w}_v}}}$ can be computed as:
\begin{align}
\frac{{\partial {\hat{r}_{ui}}}}{{\partial {\bm{w}_v}}} = \bm{J}_{\w_v}(\x_u)^{\top} \y_{i}  + \bm{J}_{\w_v}(\y_i)^{\top} \x_u,
\end{align}
where $\bm{J}_{\w_v}(\x_u)$ and $\bm{J}_{\w_v}(\y_i)$ are the Jacobian matrices of $\x_u$ and $\y_i$ taken with respect to $\w_v$, respectively.
Next, we leverage first-order stationary condition to approximately compute $\bm{J}_{\w_v}(\x_u)$ and $\bm{J}_{\w_v}(\y_i)$. 
Note that the optimal solution of problem in (\ref{mali_matrix_fact}) satisfies the following first-order stationary condition:
\begin{align} 
{\lambda}{\bm{x}_u} &= \sum\nolimits_{i \in {\Omega _u}} {({r_{ui}} - \bm{x}_u^\top{\bm{y}_i})} {\bm{y}_i} \label{KKT_x_u},
\\
{\lambda}{\bm{y}_i} &= \sum\nolimits_{u \in {\Omega ^i}} {({r_{ui}} - \bm{x}_u^\top{\bm{y}_i})} {\bm{x}_u} + ({w_{vi}} - {\bm{z}^\top}{\bm{y}_i})\bm{z} \label{KKT_y_i},
\\
{\lambda}\bm{z} &= \sum\nolimits_{i \in \mathcal{I}}{(w_{vi} - {\bm{z}^\top}{\bm{y}_i})} {\bm{y}_i} \label{KKT_z},
\end{align}
where ${\Omega _u}$ is the set of items rated by user $u$ and ${\Omega ^i}$ is the set of users who rate the item $i$. 
Inspired by \cite{xiao2015feature,li2016data}, we assume that the optimality conditions given by (\ref{KKT_x_u})--(\ref{KKT_z}) remain valid under an infinitesimal change of $\bm{w}_v$. 
Thus, setting the derivatives of (\ref{KKT_x_u})--(\ref{KKT_z}) with respect to $\bm{w}_v$ to zero and with some algebraic computations, we can derive that: 
\begin{align} 
\frac{{\partial {\bm{x}_u}}}{{\partial {w_{vi}}}} &= \mathbf{0},  \label{x_to_wvi} \\
\frac{{\partial {\bm{y}_i}}}{{\partial w_{vi}}} &= {\left( {{\lambda}{\bm{I}} + \sum\nolimits_{u \in {\Omega ^i}}{{\bm{x}_u}{\bm{x}_u^\top} + \bm{z}\bm{z}^\top} } \right)^{ - 1}}\bm{z},
\label{y_to_wvi}
\end{align}
where $\bm{I}$ is the identity matrix and (\ref{y_to_wvi}) follows from $({\bm{x}_u^\top}\bm{y}_i)\bm{x}_u = (\bm{x}_u{\bm{x}_u^\top})\bm{y}_i$. 
Lastly, computing 
(\ref{x_to_wvi}) and 
(\ref{y_to_wvi}) for all $i\in\Gamma_u$ yields $\bm{J}_{\w_v}(\x_u)$ and $\bm{J}_{\w_v}(\y_i)$.
Note that $\nabla_{\w_v} \hat{r}_{ut}$ can be computed in exactly the same procedure. 
Finally, after obtaining $G({\bm{w}_v})$, we can use the projected subgradient method \cite{Bazaraa_Jarvis_Sherali_90:LP} to solve $\bm{w}_v$ for fake user $v$. 
With $\w_v$, we select the top $n$ items with largest values of $w_{vi}$ as the filler items.
However, the values of $\bm{w}_v$ obtained from solving  (\ref{appproblem}) may not mimic the rating behaviors of normal users. 
To make our $\mathcal{S}$-attack more ``stealthy,'' 
 we will show how to generate rating scores to disguise fake user $v$.
We first set $r_{vt} = r_{max}$ to promote the target item $t$.
Then, we generate rating scores for the filler items by rating each filler item with a normal distribution around the mean rating for this item by legitimate users,  where $\mathcal{N}(\mu_{i},\,\sigma_{i}^{2})$ is the normal distribution with mean $\mu_{i}$ and variance $\sigma_{i}^{2}$ of item $i$. 
Our $\mathcal{S}$-attack algorithm is summarized in Algorithm~\ref{attack_algo}.

\begin{algorithm}[t!]
	\caption{\textit{Our $\mathcal{S}$-Attack.}}\label{attack_algo}
	\begin{algorithmic}[1]
		\renewcommand{\algorithmicrequire}{\textbf{Input:}}
		\renewcommand{\algorithmicensure}{\textbf{Output:}}
		\REQUIRE  Rating matrix $\bm{R}$, target item $t$, parameters $m, n, d, \eta, \lambda, \Delta, b$.
		\ENSURE  Fake user set $\mathcal{M}$.
		\STATE Find influential user set $\mathcal{S}$ according to Algorithm~\ref{Find_Influential_user_Set} for item $t$.
		\STATE Let $\mathcal{M} = \emptyset$. 
		\FOR {$v=1,\cdots,m$} 
			\STATE Solve the optimization problem defined in Eq. (\ref{appproblem}) to get $\bm{w}_{v}$.
			\STATE Select $n$ items with the largest values of $w_{vi}$ as filler items.
		\STATE Set $r_{vt} =  r_{max}$.
		\STATE Let $\mu_{i}$ and $\sigma_{i}^{2}$ be item $i$'s mean and variance of the scores rated by all normal users. Let $r_{vi} \sim \mathcal{N}(\mu_{i},\,\sigma_{i}^{2})$ be the random rating for each filler item $i$ given by fake user $v$. 
		\label{normaldistribution}
		\STATE Let $R\leftarrow R \cup \{\bm{r}_v\}$ and $\mathcal{M} \leftarrow \mathcal{M} \cup \{ v \}$.
		\ENDFOR
		\RETURN $\left\{ {{\bm{r}_v}} \right\}_{v = 1}^m$ and $\mathcal{M}$. 
	\end{algorithmic} 
\end{algorithm}


\section{Experiments} \label{sec:exp}

\subsection{Experimental Setup}

\subsubsection{Datasets}
We evaluate our attack on two real-world datasets. 
The first dataset is \textbf{Amazon Digital Music (Music)}~\cite{amazonURL}. This dataset consists of 88,639 ratings on 15,442 music by 8,844 users.
The second dataset is \textbf{Yelp}~\cite{yelpURL}, which contains 504,713 ratings of 11,534 users on 25,229 items.

\subsubsection{$\mathcal{S}$-Attack Variants} 
With different ways of choosing the influential user set $\mathcal{S}$,
we compare three variants of our $\mathcal{S}$-attack.

\myparatight{$\mathcal{U}$-Top-$N$ attack ($\mathcal{U}$-TNA)} This variant uses all normal users as the influential user set $\mathcal{S}$, i.e., $\mathcal{S} = \mathcal{U}$, then solve  Problem  \eqref{appproblem}.

\myparatight{$\mathcal{S}$-Top-$N$ attack+Random ($\mathcal{S}$-TNA-Rand)} This variant randomly selects $\Delta$ users as the influential user set $\mathcal{S}$, then solve  Problem  \eqref{appproblem}.

\myparatight{$\mathcal{S}$-Top-$N$ attack+Influence ($\mathcal{S}$-TNA-Inf)} This variant finds the influential user set $\mathcal{S}$ by Algorithm~\ref{Find_Influential_user_Set}, then solve  Problem  \eqref{appproblem}.

\subsubsection{Baseline Attacks} 

We compare our $\mathcal{S}$-attack variants with the following baseline attacks.

\myparatight{Projected gradient ascent attack (PGA) \cite{li2016data}} PGA attack aims to assign high rating scores to the target items and generates filler items randomly for the fake users to rate. 

\myparatight{Stochastic gradient Langevin dynamics attack (SGLD) \cite{li2016data}} T-his attack also aims to assign high rating scores to the target items, but it mimics the rating behavior of normal users. 
Each fake user will select $n$ items with the largest absolute ratings as filler items. 

\subsubsection{Parameter Setting} Unless otherwise stated, we use the following default parameter setting: $d=64$, $\Delta=400$, $\eta = 0.01$, $b = 0.01$, and $N = 10$. 
Moreover, we set the attack size  to be 3\% (i.e., the number of fake users is 3\% of the number of normal users) 
and the number of filler items is set to $n = 20$. 
We randomly select 10 items as our target items and the hit ratio (HR@$N$) is averaged over the 10 target items, where HR@$N$ of a target item is the fraction of normal users whose top-$N$ recommendation lists contain the target item. 
Note that our  $\mathcal{S}$-attack  is $\mathcal{S}$-TNA-Inf attack.

\begin{table}[!t]
	\centering
	\addtolength{\tabcolsep}{-1.95pt}
	\setlength{\doublerulesep}{2\arrayrulewidth}
	\setlength\extrarowheight{-0.5pt}
	\caption{HR@10 for different attacks.}
	\label{Data_poisoning_attacks_Results}
	\small
	\begin{tabular}{c|c|ccccc}
		\hline
		\multirow{2}[4]{*}{Dataset} & \multirow{2}[4]{*}{Attack} & \multicolumn{5}{c}{Attack size} \bigstrut \\
		\cline{3-7}          &       & 0.3\%     & 0.5\%     & 1\%    & 3\%    & 5\% \bigstrut \\
		\hline
		\hline
		\multirow{6}[2]{*}{Music} 
		& None  & 0.0017 & 0.0017 & 0.0017 &0.0017 & 0.0017 \\
		& PGA \cite{li2016data}  & 0.0107 & 0.0945 & 0.1803 & 0.3681 & 0.5702 \\
		& SGLD \cite{li2016data} & 0.0138 & 0.1021 & 0.1985 & 0.3587 & 0.5731 \\
		& $\mathcal{U}$-TNA & 0.0498 & 0.1355 & 0.2492 & 0.4015 & 0.5832 \\
		& $\mathcal{S}$-TNA-Rand & 0.0141 & 0.0942 & 0.2054 & 0.3511 & 0.5653 \\
		& $\mathcal{S}$-TNA-Inf  & \textbf{0.0543} & \textbf{0.1521} & \textbf{0.2567} & \textbf{0.4172} & \textbf{0.6021} \\
		\hline
		\multirow{6}[2]{*}{Yelp} 
		& None  & 0.0015 & 0.0015 & 0.0015 & 0.0015 & 0.0015 \\
		& PGA \cite{li2016data}  & 0.0224 & 0.1623 & 0.4162 & 0.4924 & 0.6442 \\
		& SGLD \cite{li2016data} & 0.0261 & 0.1757 & 0.4101 & 0.5131 & 0.6431 \\
		& $\mathcal{U}$-TNA & 0.0619 & \textbf{0.2304} & 0.4323 & 0.5316 & 0.6806 \\
		& $\mathcal{S}$-TNA-Rand & 0.0258 & 0.1647 & 0.4173 & 0.4923 & 0.6532 \\
		& $\mathcal{S}$-TNA-Inf  & \textbf{0.0643} & 0.2262 & \textbf{0.4415} & \textbf{0.5429} & \textbf{0.6813} \\
		\hline
		\hline
	\end{tabular}%
\end{table}%

\subsection{Full-Knowledge Attack}
In this section, we consider the worst-case attack scenario, where the attacker has full knowledge of the recommender system, e.g., the type of the target recommender system (matrix-factorization-based), all rating data, and the parameters of the recommender system (e.g., the dimension $d$
and the tradeoff parameter $\lambda$ in use).

Table~\ref{Data_poisoning_attacks_Results} summaries the results of different attacks. 
``None'' means the hit ratios without any attacks. 
First,  we observe that the variants of our $\mathcal{S}$-attack can effectively promote the target items using only a small number of fake users. 
For instance, in the Yelp dataset, when injecting only 0.5\% of fake users, $\mathcal{S}$-TNA-Inf attack improves the hit ratio by 150 times for a random target item compared to that of the non-attack setting. 
Second,  the variants of our $\mathcal{S}$-attack outperform the baseline attacks in most cases. 
This is because the baseline attacks aim to manipulate all the missing entries of the rating matrix, while our attack aims to manipulate the top-$N$ recommendation lists. 
Third, it is somewhat surprising to see that the $\mathcal{S}$-TNA-Inf attack outperforms the $\mathcal{U}$-TNA attack. 
Our observation shows that by dropping the users that are not influential to the recommendation of the target items when optimizing the rating scores for the fake users, we can improve the effectiveness of our attack.

\vspace{-.01in}
\subsection{Partial-Knowledge Attack}

In this section, we consider partial-knowledge attack. 
In particular, we consider the case where the attacker knows the type of the target recommender system (matrix-factorization-based), but the attacker has access to a subset of the ratings for the normal users and does not know the dimension $d$. In particular, we view the user-item rating matrix as a bipartite graph. Given a size of observed data, we construct the subset of ratings by selecting nodes (users and items) with increasing distance from the target item (e.g., one-hop distance to the target item, then two-hop distance and so on) on the bipartite graph until we reach the size of observed data.

\begin{figure}[t]
	\centering
	\begin{subfigure}[b]{0.23\textwidth}
		\includegraphics[width=\textwidth]{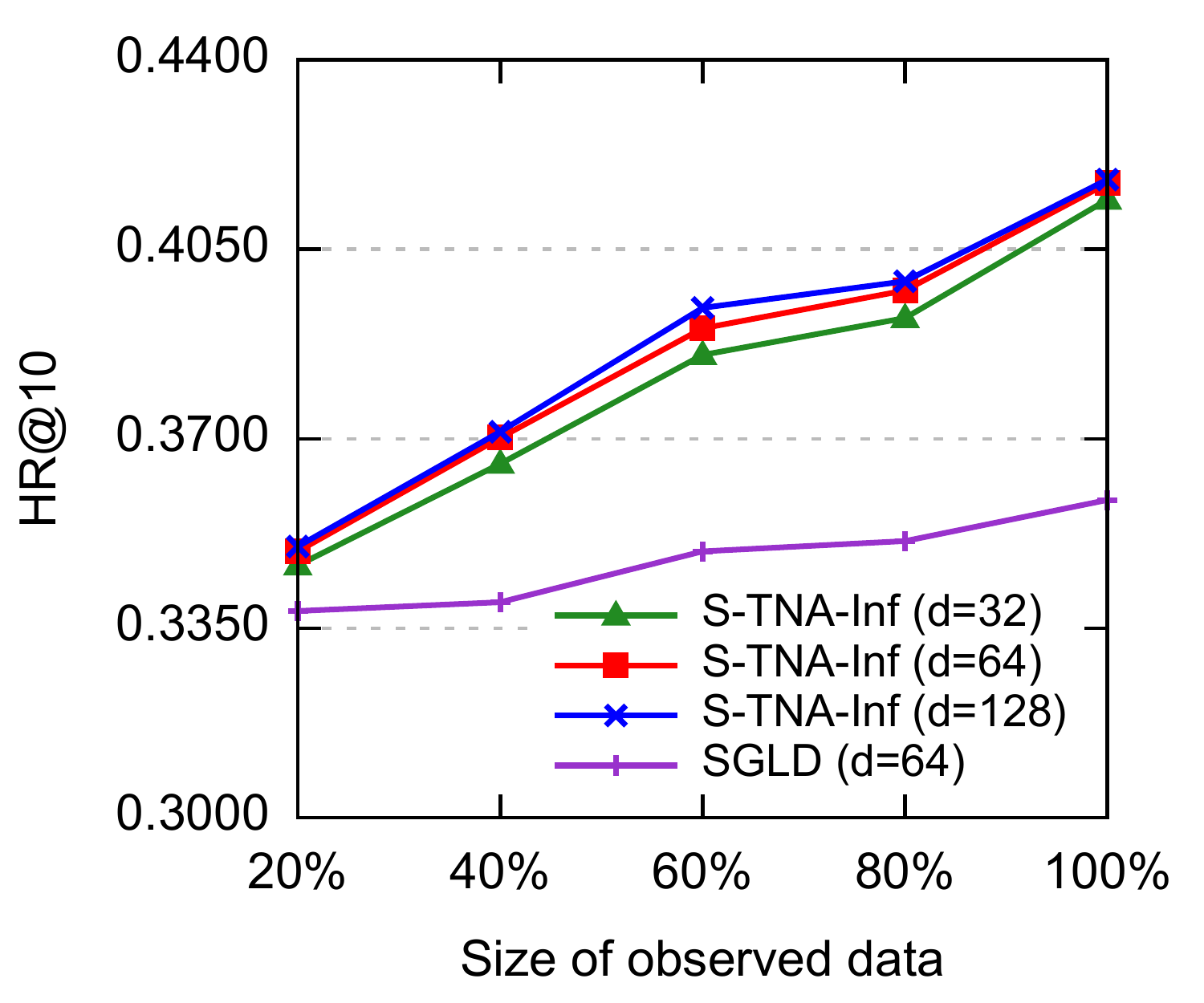}
		\caption{Music}
	\end{subfigure}
	\begin{subfigure}[b]{0.23\textwidth}
		\includegraphics[width=\textwidth]{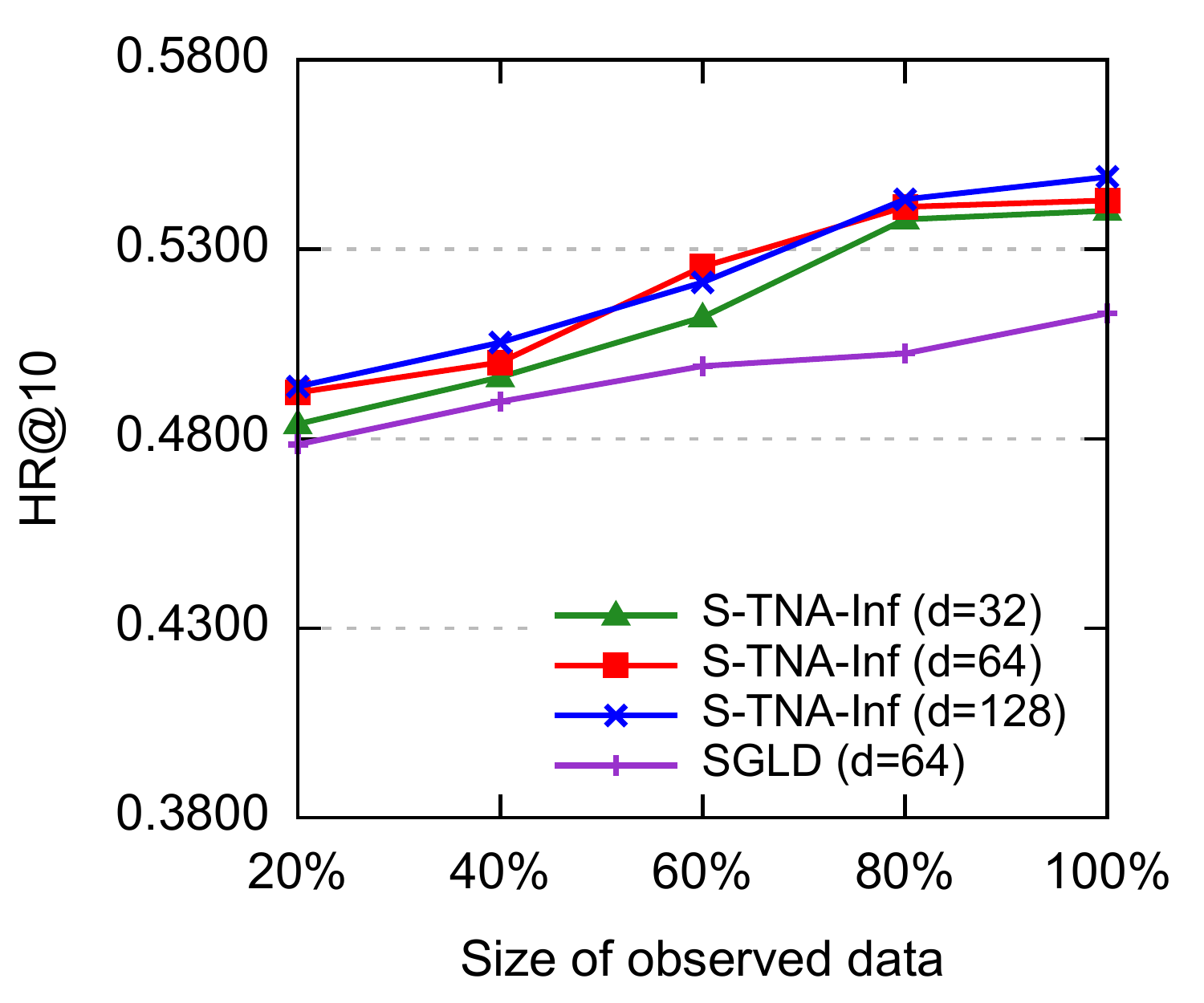}
		\caption{Yelp}
	\end{subfigure}
	\caption{The attacker knows a subset of ratings for the normal users and does not know $d$.} 
	\label{partial_knowledge}
\end{figure}

Figure~\ref{partial_knowledge} shows the attack results when the attacker observes different amounts of normal users ratings and our attack uses different $d$, where the target recommender system uses $d=64$. The attack size is set to be 3\%.
Note that in the partial-knowledge attack, the attacker selects the influential user set and generates fake users based only on the observed data. 
Naturally, we observe that as the attacker has access to more ratings of the normal users, the attack performance improves. 
We find that our attack also outperforms SGLD attack (which performs better than PGA attack) in the partial-knowledge setting. 
Moreover, our attack is still effective even if the attacker does not know $d$. In particular, the curves corresponding to different $d$ are close to each other for our attack in Figure~\ref{partial_knowledge}. 

\begin{figure}[t]
	\centering
	\begin{subfigure}[b]{0.23\textwidth}
		\includegraphics[width=\textwidth]{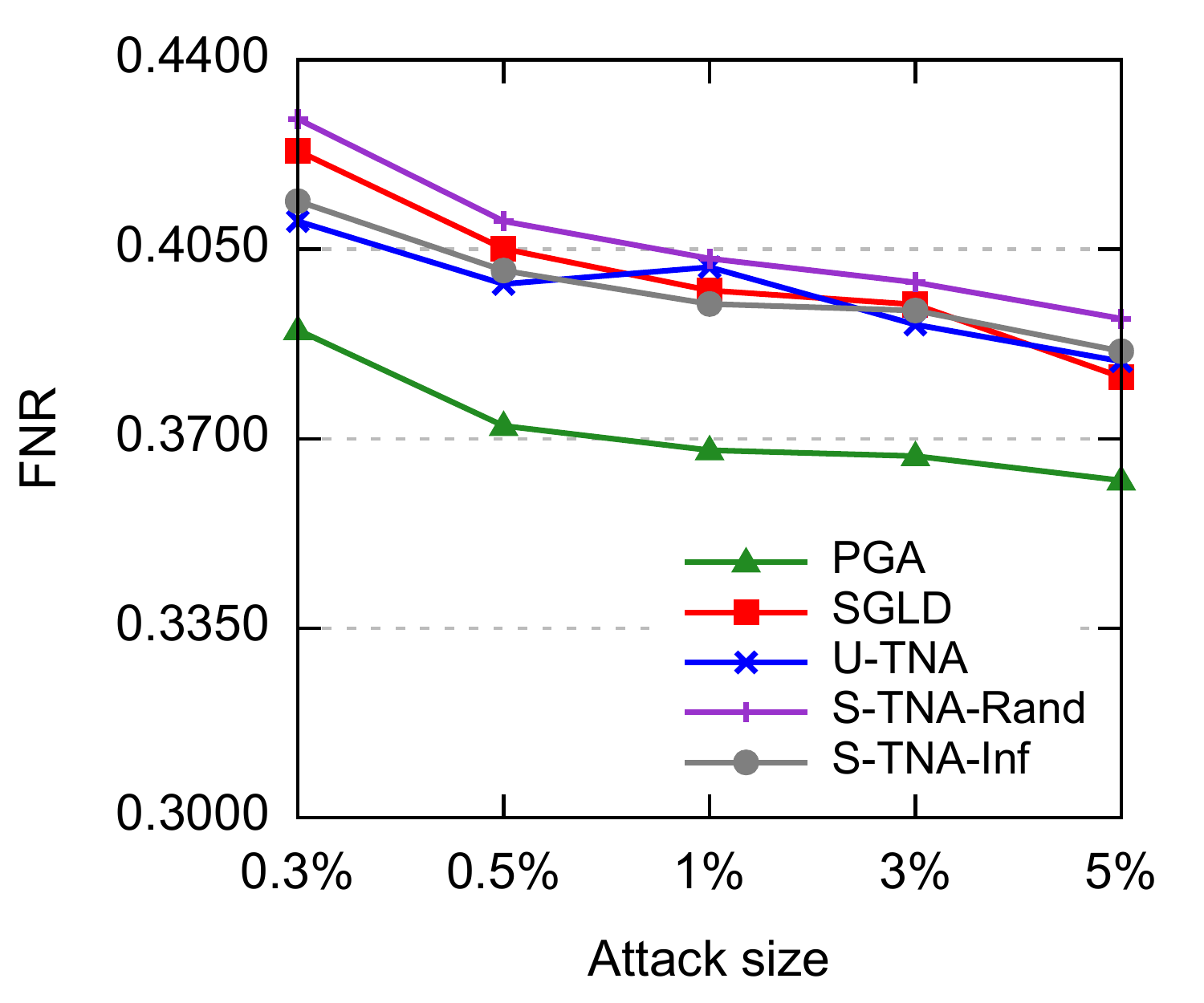}
		\caption{Music}
	\end{subfigure}
	\begin{subfigure}[b]{0.23\textwidth}
		\includegraphics[width=\textwidth]{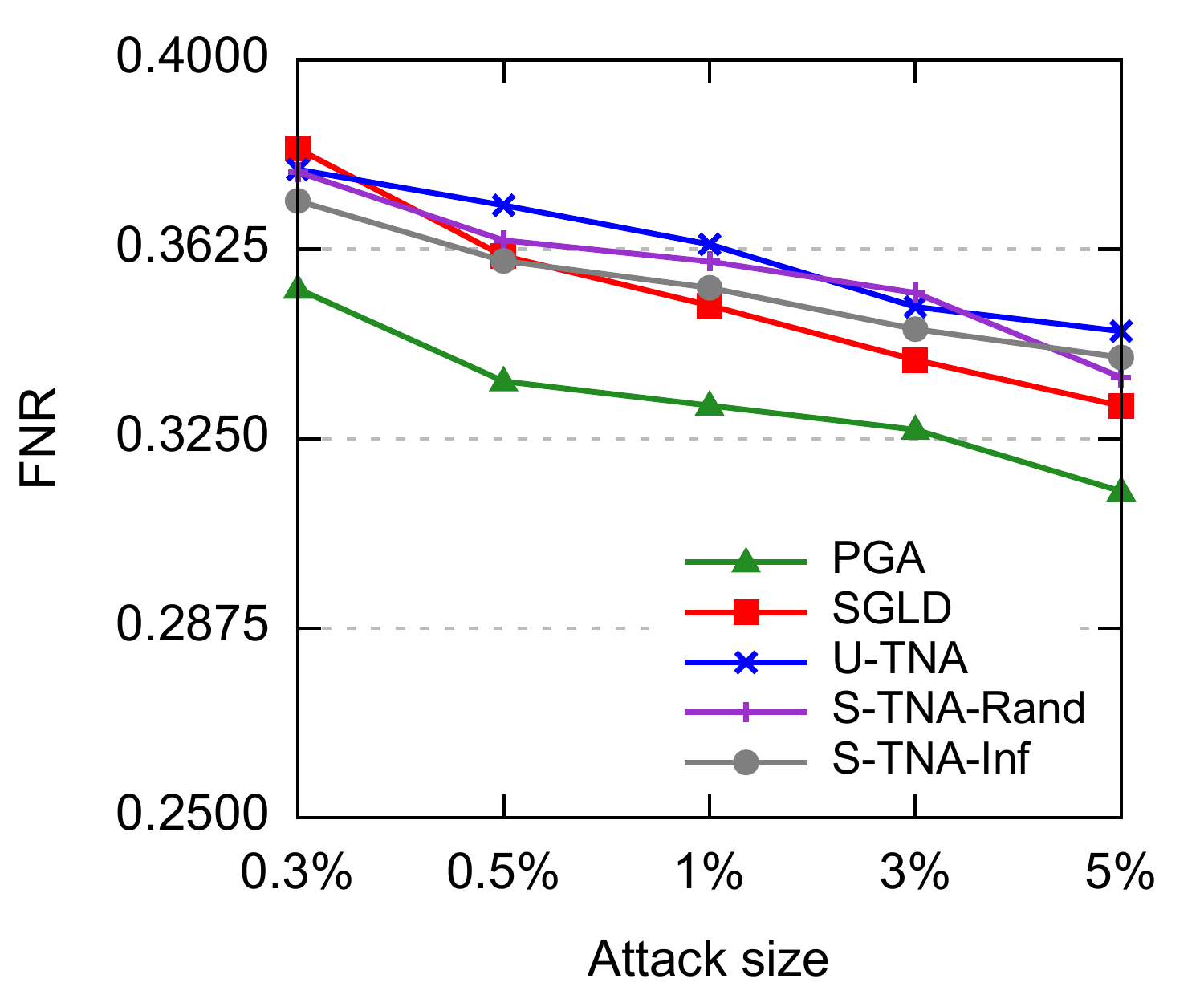}
		\caption{Yelp}
	\end{subfigure}
	\caption{FNR scores for different attacks.} 
	\label{FNR_scores}
\end{figure}

\section{Detecting Fake Users}
\label{sec:attack_fake_recom}
To minimize the impact of potential attacks on recommender systems, a service provider may arm the recommender systems with certain fake-user detection capability. 
In this section, we investigate whether our attack is still effective in attacking the fake-user-aware recommender systems. 
Specifically, we extract six features--namely, RDMA~\cite{chirita2005preventing}, WDMA~\cite{mobasher2007toward}, WDA~\cite{mobasher2007toward}, TMF~\cite{mobasher2007toward}, FMTD~\cite{mobasher2007toward}, and MeanVar~\cite{mobasher2007toward}--for each user from its ratings.
Then, for each attack, we construct a training dataset consisting of 800 fake users generated by the attack and 800 randomly sampled normal users. We use the training dataset to learn a SVM classifier. Note that the classifier may be different for different attacks. 

\myparatight{Fake-user detection results} We deploy the trained SVM classifiers to detect the fake users under different attacks settings. 
Figure~\ref{FNR_scores} reports the fake users detection results of different attacks, where False Negative Rate (FNR) represents the fraction of fake users that are predicted to be normal. 
From Figure~\ref{FNR_scores}, we find that 
PGA attack is most likely to be detected. The reason is that the fake users generated by PGA attack do not rate the filler items according to normal users' behavior, thus the generated fake users are easily detected. 
We also observe that a large fraction of fake users are not detected. 

\myparatight{Attacking fake-user-aware recommender systems} 
We now test the performance of attacks on fake-user-aware recommender systems. 
Suppose that the service provider removes the predicted fake users from the system detected by the trained SVM classifiers. 
We recompute the hit ratio after the service provider excludes the predicted fake users from the systems. 
Note that a large portion of fake users and a small number of normal users will be deleted.
The results are shown in Table~\ref{attack_under_detection}. 
We observe that PGA attack achieves the worst attack performance when the service provider removes the predicted fake users from the systems.
The reason is that the PGA attack is most likely to be detected. 
Comparing Table~\ref{Data_poisoning_attacks_Results} and Table~\ref{attack_under_detection}, we can see that when the target recommender system is equipped with fake-user detectors, our attacks remain effective in promoting the target items and outperform the baseline attacks. 
This is because the detectors miss a large portion of the fake users.

\begin{table}[t!]
	\centering
	\addtolength{\tabcolsep}{-1.95pt}
	\setlength{\doublerulesep}{2\arrayrulewidth}
	\setlength\extrarowheight{-0.5pt}
	\caption{HR@10 for different attacks when attacking the fake-user-aware recommender systems.}
	\label{attack_under_detection}
	\small
	\begin{tabular}{c|c|ccccc}
		\hline
		\multirow{2}[4]{*}{Dataset} & \multirow{2}[4]{*}{Attack} & \multicolumn{5}{c}{Attack size} \bigstrut \\
		\cline{3-7}          &       & 0.3\%     & 0.5\%     & 1\%    & 3\%    & 5\% \bigstrut \\
		\hline
		\hline
		\multirow{6}[2]{*}{Music} 
		& None  & 0.0011 & 0.0011 & 0.0011 & 0.0011 & 0.0011 \\
		& PGA \cite{li2016data}  & 0.0028 & 0.0043 & 0.0311 & 0.2282 & 0.3243 \\
		& SGLD \cite{li2016data} & 0.0064 & 0.0145 & 0.0916 & 0.2631 & 0.3516 \\
		& $\mathcal{U}$-TNA & 0.0127 & 0.0298 & \textbf{0.1282} & 0.2846 & 0.3652 \\
		& $\mathcal{S}$-TNA-Rand & 0.0068 & 0.0139 & 0.0934 & 0.2679 & 0.3531 \\
		& $\mathcal{S}$-TNA-Inf  & \textbf{0.0199} & \textbf{0.0342} & 0.1215 & \textbf{0.2994} & \textbf{0.3704} \\
		\hline
		\multirow{6}[2]{*}{Yelp} 
		& None  & 0.0010 & 0.0010 & 0.0010 & 0.0010 & 0.0010 \\
		& PGA \cite{li2016data}  & 0.0018 & 0.0062 & 0.1143 & 0.3301 & 0.4081 \\
		& SGLD \cite{li2016data} & 0.0097 & 0.0278 & 0.1585 & 0.3674 & 0.4223 \\
		& $\mathcal{U}$-TNA & 0.0231 & 0.0431 & 0.1774 & 0.3951 & 0.4486 \\
		& $\mathcal{S}$-TNA-Rand & 0.0093 & 0.0265 & 0.1612 & 0.3665 & 0.4269 \\
		& $\mathcal{S}$-TNA-Inf  & \textbf{0.0242} & \textbf{0.0474} & \textbf{0.1831} & \textbf{0.3968} & \textbf{0.4501} \\
		\hline
		\hline
	\end{tabular}%
\end{table}%


\section{Discussion} \label{sec:Discussion}

We show that our influence function based approach can be extended to enhance data poisoning attacks to graph-based top-$N$ recommender systems. In particular, we select a subset of normal users based on influence function and optimize data poisoning attacks using them. Moreover, we show that an attacker can also use influence function to weight each normal user instead of selecting the most influential ones, which sacrifices computational efficiency but achieves even better attack effectiveness. 

\subsection{Influence Function for Graph-based Recommender Systems}

We investigate whether we can extend our influence function based method to optimize data poisoning attacks to graph-based recommender systems~\cite{fang2018poisoning}. 
Specifically, we aim to find a subset of users who have the largest impact on the target items in graph-based recommender systems. 
It turns out that, when optimizing the attack over these subset of users, we obtain better attack effectiveness. 
Toward this end, we will first show how to find a subset of influential users for the target items in graph-based recommender systems. 
Then, we optimize the attack proposed by \cite{fang2018poisoning} over the subset of influential users.

We consider a graph-based recommender system using  random walks~\cite{pirotte2007random}. 
Specifically, the recommender system models the user-item ratings as a bipartite graph, where a node is a user or an item,  an edge between a user and an item means that the user rated the item, and an edge weight is the corresponding rating score. 
We let $\bm{p}_u$ represent the stationary distribution of a random walk with restart that starts from the user $u$ in the bipartite graph. 
Then, $\bm{p}_u$ can be computed by solving the following equation:
\begin{align}
\label{RWR}
\bm{p}_u = (1 - \alpha ) \cdot \bm{Q} \cdot \bm{p}_u + \alpha  \cdot \bm{e}_u,
\end{align}
where $\bm{e}_u$ is a basis vector whose $u$-th entry is 1 and all other entries are 0,  $\bm{Q}$ is the \emph{transition matrix}, and $\alpha$ is the restart probability. We let $q_{ui}$ denote the value at $(u,i)$-th entry of matrix $\bm{Q}$. Then $q_{ui}$ can be computed as:
\begin{align}
\label{matrixQ}
{q_{ui}} = \begin{cases}
\frac{{{r_{ui}}}}{{\sum\nolimits_{j} {{r_{uj}}} }}, & \mbox{if $(u, i) \in \mathcal{E} $},\\
0, & \mbox{otherwise.}
\end{cases}
\end{align}

The $N$ items that were not rated by user $u$ and that have the largest probabilities in the stationary distribution  $\bm{p}_u$ are recommended to $u$. 
We define the influence of removing edge $(k,j) \in \mathcal{E}$ in the user-item bipartite graph on the target item $t$ when performing a random walk from user $u$ as the change of prediction at $p_{ut}$ upon removing edge $(k,j)$:
\begin{align}
\label{influ_put_deri_Q}
\delta((k,j),p_{ut}) \stackrel{\text{def}} = \frac{{\partial {p_{ut}}}}{{\partial q_{kj}}},
\end{align}
where $q_{kj}$ is the transition matrix entry as defined in (\ref{matrixQ}). According to (\ref{RWR}), $\frac{{\partial {\bm{p}_{u}}}}{{\partial q_{kj}}}$ can be computed as:
\begin{align}
\label{influ_p_deri_Q}
\frac{{\partial {\bm{p}_u}}}{{\partial q_{kj}}} = (1 - \alpha )\frac{{\partial \bm{Q}}}{{\partial q_{kj}}}{\bm{p}_u} + (1 - \alpha )\bm{Q}\frac{{\partial {\bm{p}_u}}}{{\partial q_{kj}}}.
\end{align}
\noindent After rearranging terms in (\ref{influ_p_deri_Q}), we have:
\begin{align}
\frac{{\partial {\bm{p}_u}}}{{\partial q_{kj}}} = (1 - \alpha ){\left( {\bm{I} - (1 - \alpha )\bm{Q}} \right)^{ - 1}}\frac{{\partial \bm{Q}}}{{\partial q_{kj}}}{\bm{p}_u},
\end{align}
\noindent where $\bm{I}$ is the identity matrix, $\frac{{\partial \bm{Q}}}{{\partial q_{kj}}}$ is a single-nonzero-entry matrix with
its $(k,j)$-th entry being 1 and 0 elsewhere. 
By letting $\bm{M} \triangleq  {\left( {\bm{I} - (1 - \alpha )\bm{Q}} \right)^{ - 1}}$, we have the following:
\begin{align}
\frac{{\partial {\bm{p}_u}}}{{\partial q_{kj}}} = (1 - \alpha ){p_{uj}}\bm{M}(:,k),
\end{align}
where $\bm{M}(:,k)$ is the $k$-th column of matrix $\bm{M}$. 
Then, the influence of removing edge $(k,j)$ on the prediction at the target item $t$ when performing a random walk from user $u$ can be calculated as:
\begin{align}
\delta((k,j),p_{ut}) =\frac{{\partial {p_{ut}}}}{{\partial q_{kj}}} = (1 - \alpha ){p_{uj}}M(t,k).
\end{align}
Therefore, the influence of removing edge $(k,j)$ on the prediction at the target item $t$ can be computed as:
\begin{align}
\label{edge_influ_target_graph}
\varphi((k,j),t)  & \stackrel{\text{def}}= \sum\limits_{u \in \mathcal{U}} {\delta((k,j),p_{ut})}.
\end{align}
We could approximate matrix $\bm{M}$ by using Taylor expansion 
$\bm{M} = {\left( {\bm{I} - (1 - \alpha )\bm{Q}} \right)^{ - 1}} \approx \bm{I} + \sum\nolimits_{i = 1}^T  {{{(1 - \alpha )}^i}} {\bm{Q}^i}$.
For example, we can choose $T=1$ if we use first order Taylor approximation.

After obtaining $\varphi((k,j),t)$, we can compute the influence of user $k$ at the target item $t$, namely $\pi(k,t)$, based on (\ref{influence_user_target}).
Then, we apply Algorithm \ref{Find_Influential_user_Set} to approximately find an influential user set $\mathcal{S}$. 
With the influential user set $\mathcal{S}$, we can optimize the attack  proposed by \cite{fang2018poisoning} over the most influential user set and compare with the attack proposed by \cite{fang2018poisoning}, which uses all normal users. 
The poisoning attack results of graph-based recommender systems are shown in Table~\ref{compare_fang}, where the experimental settings are the same as those in \cite{fang2018poisoning}.
Here, ``None'' in Table~\ref{compare_fang} means the hit ratios without attacks computed in graph-based recommender systems; and 
 ``$\mathcal{S}$-Graph'' means optimizing the attack  proposed by \cite{fang2018poisoning} over the most influential users in $\mathcal{S}$, where we select 400 influential users. 
From Table~\ref{compare_fang}, we  observe that the optimized attacks based on influence function outperform existing ones \cite{fang2018poisoning}. 

\begin{table}[!t]
	\centering
	\addtolength{\tabcolsep}{-1.95pt}
	\setlength{\doublerulesep}{2\arrayrulewidth}
	\setlength\extrarowheight{-0.5pt}
	\caption{HR@10 for attacks to graph-based recommender systems.}
	\label{compare_fang}
	\small
	\begin{tabular}{c|c|ccccc}
		\hline
		\multirow{2}[4]{*}{Dataset} & \multirow{2}[4]{*}{Attack} & \multicolumn{5}{c}{Attack size} \bigstrut  \\
		\cline{3-7}          &       & 0.3\%     & 0.5\%     & 1\%    & 3\%    & 5\% \bigstrut \\
		\hline
		\hline
		\multirow{3}[2]{*}{Music} 
		& None  & 0.0021 & 0.0021 & 0.0021 &  0.0021 & 0.0021 \\
		& Fang et. al \cite{fang2018poisoning} & 0.0252   & 0.1021 & 0.2067 &  0.2949 & 0.5224 \\
		& $\mathcal{S}$-Graph & 0.0245     & \textbf{0.1046} & \textbf{0.2125} &  \textbf{0.3067} & \textbf{0.5368} \\
		\hline
		\multirow{3}[2]{*}{Yelp} 
		& None  & 0.0023     & 0.0023 & 0.0023 &  0.0023 & 0.0023 \\
		& Fang et. al \cite{fang2018poisoning} & 0.0256  & 0.1359 & 0.2663 &  \textbf{0.4024} & 0.5704 \\
		& $\mathcal{S}$-Graph & \textbf{0.0342}  & \textbf{0.1514} & \textbf{0.2701} &  0.4011 & \textbf{0.5723} \\
		\hline
		\hline
	\end{tabular}%
\end{table}%

\subsection{Weighting Normal Users}

In this section, we show that our approach can be extended to a more general framework: we can weight the normal users instead of dropping some of them using the influence function. More specifically, we optimize our attack  over all normal users, and different normal users are assigned different weights in the objective function based on their importance with respect to the target item. 
Intuitively, the important users should receive more penalty if the target item does not appear in those users' recommendation lists.
Toward this end, we let $\bm{\mathcal{H}} = [\mathcal{H}_u, u \in \mathcal{U}]^{\top}$ be the weight vector for all normal users, then we can modify the loss function defined in (\ref{appproblem}) as:
\begin{align}
\label{appproblem_reweight}
\begin{split}
\min_{\w_v,\forall v} \mathcal{L}_{\mathcal{U}}(\bm{w}_v) &= \sum_{u \in \mathcal{U}} {\sum_{i \in \Gamma_u} g(\hat{r}_{ui} - \hat{r}_{ut})} \cdot \mathcal{H}_u +  \eta {\left\| \bm{w}_v \right\|_1} \\
\text{s.t. } & \sum_{u \in \mathcal{U}} \mathcal{H}_u = 1, \\
& w_{vi} \in [0, r_{max}],
\end{split}
\end{align}
where $\mathcal{H}_u$ is the weight for normal user $u$ and satisfies $\mathcal{H}_u \ge 0$. We can again leverage the influence function technique to compute the weight vector $\bm{\mathcal{H}}$. 
For a normal user $k$, the weight can be computed in a normalized fashion as follows:
\begin{align}
\label{user_weight_vector}
\mathcal{H}_k = \frac{\pi(k,t)}{\sum_{u \in \mathcal{U}} \pi(u,t)},
\end{align}
where $\pi(k,t)$ is the influence of user $k$ at the target item $t$, and can be computed according to (\ref{influence_user_target}). Note that here we compute $\pi(k,t)$ for each user $k$ at one time.

After obtaining the weight vector $\bm{\mathcal{H}}$, we can compute the derivative of function defined in (\ref{appproblem_reweight}) in a similar way. Table~\ref{Reweighting_Training_Instances} illustrates the attack results on matrix-factorization-based recommender systems when we weight normal users, where the experimental settings are the same as those in Table~\ref{Data_poisoning_attacks_Results}. 
Here, ``Weighting" means that we weight each normal user and optimize the attack of (\ref{appproblem_reweight}) over the weighted normal users, and the weight of each normal user is computed based on (\ref{user_weight_vector}). 
Comparing Tables~\ref{Data_poisoning_attacks_Results} and \ref{Reweighting_Training_Instances}, we can see that the performance is improved when we consider the weights of different normal users with respect to the target items. Our results show that, when an attacker has enough computational resource, the attacker can further improve attack effectiveness using influence function to weight normal users instead of dropping some of them.

\begin{table}[!t]
	\centering
	\addtolength{\tabcolsep}{-1.95pt}
	\setlength{\doublerulesep}{2\arrayrulewidth}
	\setlength\extrarowheight{-0.5pt}
	\caption{HR@10 for weighting-based attacks to matrix-factorization-based recommender systems.}
	\label{Reweighting_Training_Instances}
	\small
	\begin{tabular}{c|c|ccccc}
		\hline
		\multirow{2}[4]{*}{Dataset} & \multirow{2}[4]{*}{Attack} & \multicolumn{5}{c}{Attack size} \bigstrut \\
		\cline{3-7}          &       & 0.3\%     & 0.5\%     & 1\%    & 3\%    & 5\% \bigstrut \\
		\hline
		\hline
		Music & Weighting  & \textbf{0.0652} & \textbf{0.1543} & 0.2436 &  \textbf{0.4285} & \textbf{0.6087} \\
		\hline
		Yelp  & Weighting & \textbf{0.0698}  & \textbf{0.2312} & \textbf{0.4498} &  \textbf{0.5501} & \textbf{0.6924} \\
		\hline
		\hline
	\end{tabular}%
	\label{tab:addlabel}%
\end{table}%


\section{Conclusion} \label{sec:conclusion}

In this paper, we proposed the first data poisoning attack to matrix-factorization-based top-$N$ recommender systems. 
Our key idea is that, instead of optimizing the ratings of a fake user using all normal users, we use a subset of influential users. Moreover, we proposed an efficient influence function based method to determine the influential user set for a specific target item.
We also performed extensive experimental studies to demonstrate the efficacy of our proposed attacks.
Our results showed that our proposed attacks outperform existing ones.
%
\section*{Acknowledgements}
This work has been supported in part by NSF grants ECCS-1818791, CCF-1758736, CNS-1758757, CNS-1937786; ONR grant N00014-17-1-2417, and AFRL grant FA8750-18-1-0107.


\balance
\bibliographystyle{ACM-Reference-Format}
\bibliography{refs,refs1,refs2}

\end{document}